\documentclass[onecolumn,longbibliography,superscriptaddress,doublespace]{revtex4-2}

\usepackage{graphicx}
\usepackage{setspace,color,appendix,physics,wasysym,xcolor,xparse}
\usepackage[margin=1.4in]{geometry}
\usepackage{appendix} 
\usepackage[english]{babel}
\usepackage{amsmath,amssymb,amsfonts,mathtools}
\setcounter{secnumdepth}{3}
\usepackage{bm,bbm,euscript,braket,esint}
\definecolor{brickred}{rgb}{0.8, 0.25, 0.33}
\definecolor{celestialblue}{rgb}{0.29, 0.59, 0.82}
\definecolor{cornflowerblue}{rgb}{0.39, 0.58, 0.93}
\definecolor{denim}{rgb}{0.08, 0.38, 0.74}
\definecolor{armygreen}{rgb}{0.29, 0.33, 0.13}
\definecolor{cardinal}{rgb}{0.77, 0.12, 0.23}
\definecolor{carnelian}{rgb}{0.7, 0.11, 0.11}
\definecolor{armygreen}{rgb}{0.29, 0.33, 0.13}

\usepackage[colorlinks=true, urlcolor=black, citecolor=blue, linkcolor=red, citebordercolor={1 0 0}, linkbordercolor={0 0 1}]{hyperref}

\newtheorem{Theorem}{Theorem}
\newtheorem{lemma}{Lemma}
\newtheorem{definition}{Definition}
\newenvironment{proof}{{\bf Proof:}}{\hfill$\square$}

\usepackage{verbatim,multirow}
\usepackage[utf8]{inputenc}
\usepackage[T1]{fontenc}
\usepackage{lmodern}
\bibliographystyle{unsrt}
\usepackage{hyperref}
\hypersetup{colorlinks,linkcolor={red},citecolor={blue},urlcolor={blue}}

\usepackage[utf8]{inputenc}
\usepackage[T1]{fontenc}
\usepackage{lmodern}



\setlength{\belowcaptionskip}{-10pt}

\begin{document}
	\title{On the eternal non-Markovianity of non-unital quantum channels}
	
	\author{Shrikant Utagi}
	\email{shrikant.phys@gmail.com}
	
	\affiliation{Department of Physics, Indian Institute of Technology Madras, Chennai - 600036, India.}
	
	\author{Subhashish Banerjee}
	\email{subhashish@iitj.ac.in}
	
	\affiliation{Indian Institute of Technology Jodhpur, Jodhpur - 342037,  India.}
	
	\author{R. Srikanth}
	\email{srik@ppsir.res.in}
	\affiliation{Theoretical Sciences Division, Poornaprajna    Institute   of   Scientific
		Research, Bengaluru -  562164, India.}

\begin{abstract}
	The eternally non-Markovian Pauli channel is an example of a unital channel characterized by a negative decay rate for all time $t>0$. Here we consider the problem of constructing an analogous non-unital channel, and show in particular that a $d$-dimensional generalized amplitude damping (GAD) channel cannot be eternally non-Markovian when the non-Markovianity originates solely from the non-unital part of the channel. We study specific ramifications of this result for qubit GAD. Specifically, we construct a quasi-eternally non-Markovian qubit GAD channel, characterized by a time $t^\ast > 0$, such that the channel is non-Markovian  only and for all time $t > t^\ast$. We further point out that our negative result for the qudit GAD channel, namely the impossibility of the eternal non-Markovian property, does not hold for a general qubit or higher-dimensional non-unital channel. 
\end{abstract}

\maketitle



\section{Introduction}
Open quantum systems is the study of the dynamics of the system of interest taking into account the effect of its ambient environment and an important aspect of this is the study of non-Markovian (NM) dynamics \cite{breuer2002theory,nielsen2002quantum,banerjee2018open}, which has gained increasing attention recently in the quantum information theory community \cite{RHP14,breuer2016colloquium,vega2017dynamics,shrikant2023quantum,kumar2018non,naikoo2020non,paulson2021hierarchy}. A number of definitions of non-Markovianity have been proposed namely, based on distinguishability \cite{breuer2009measure},  completely positive (CP-) divisibility, fidelity \cite{rajagopal2010kraus}, negativity of the decay rate \cite{RHP10,hall2014canonical}, quantum temporal correlations \cite{chen2016quantifying,shrikant2021quantum} and a more general definition based on deviation from semigroup structure \cite{shrikant2020temporal}, to name a few. 

The notion of quantum non-Markovianity has always been debated, see for instance Ref.~ \cite{li2018concepts, pollock2018non, pollock2018operational, budini2018quantum, milz2019completely, shrikant2023quantum}. Its revived interest is due to the advent of newer experiments with finer engineered baths, which highlight the fact that there exists no universal definition or criterion which faithfully characterizes, detects or measures quantum non-Markovianity. Most of the developments focused on the two-time dynamical maps and the divisibility criterion based on the two-time correlation functions in determining the Markovian status of open system dynamics, See Ref. \cite{RHP14, breuer2016colloquium, vega2017dynamics}. A stronger notion of quantum Markovianity has recently been developed, that is based on an operational point of view \cite{pollock2018operational, pollock2018non, milz2019completely, budini2018quantum,li2018concepts, budini2019conditional} and eschews the use of the concept of the not-completely positive (NCP) intermediate map. Moreover, this approach meets the status of necessity and sufficiency of characterizing non-Markovianity when multi-time correlations correlations are taken into account. For the present purpose of this work, we restrict ourselves to the divisibility criteria based on the two-time correlations and two-time dynamical map formalism. In particular, we will be concerned with the eternal CP-indivisibility and P-indivisibility of channels whose non-Markovianity originates solely from the non-unital part, which we definite shortly. 

An open system evolution is generally given by the famous Gorini-Kossakowski-Lindblad (GKSL)-like master equation \cite{lindblad1976,sudarshan_stochastic_1961}:
\begin{align}
	\Dot{\rho} = \mathcal{L}(t)[\rho] = - i [H,\rho] - \sum_j \gamma_j(t) \bigg(A_j(t)\rho A_j^\dagger(t) - \frac{1}{2}\{A_j^\dagger(t)A_j(t), \rho \} \bigg),
	\label{eq:master}
\end{align}
where $\gamma_j(t)$ are the canonical decay rates \cite{hall2014canonical} and $A_j(t)$ are the jump operators. Assuming that the system state $\rho_S(t=0)$ and environment initially \textit{fixed} in the state $\rho_E = \ket{e_0}\bra{e_0}$ is a product i.e., $\rho_{SE} = \rho_S(0) \otimes \rho_E $, one can deduce a completely positive trace preserving (CPTP) dynamical map on the system state via the operator-sum form \cite{sudarshan_stochastic_1961,nielsen2002quantum}: 
$$\mathcal{E}(t)[\rho] = \sum_j \bra{e_j} U (\rho_S(0) \otimes \ket{e_0}\bra{e_0}) U^\dagger \ket{e_j} =  \sum_j E_j(t) \rho_S E_j^\dagger(t),$$ 
where $K_j(t) \equiv  \bra{e_j} U \ket{e_0}$ are called the Kraus operators, where $U$ is joint system-environment unitary. The map $\mathcal{E}$ is in fact the solution to the master equation $\Dot{\mathcal{E}} = \mathcal{L}\mathcal{E}$, which, for all $t \ge s \ge t_0$, is given by $\mathcal{E}(t,t_0) = \mathcal{T} \exp\{\int_{t_0}^{t} ds \mathcal{L}(s) \}$, where $\mathcal{T} $ is the time-ordering operator. However, it is well-known that when the initial factorization assumption is relaxed, the ensuing dynamical map need not be CP \cite{pechukas1994reduced, alicki1995comment, pechukas1995pechukas}. Moreover, if a map could be written down, one must sacrifice either complete positivity or linearity. Therefore, we stress that the dynamical maps defined throughout this work are obtained based on the initial factorization assumption.

Now, we are in a position to define the concept of divisibility of a map.

A CPTP qubit map $\mathcal{E}(t_2,t_0)$ (a qubit quantum channel) is said to be CP-divisible if for arbitrary intermediate time $t_1$ it can be written as 
\begin{align}
	\mathcal{E}(t_2,t_0) = \mathcal{E}(t_2,t_1)\mathcal{E}(t_1,t_0), \label{eq:divi}
\end{align}
such that $\mathcal{E}(t_2,t_1)$ is a CP-map. The Choi-Jamiolkowski matrix \cite{choi1975completely} or B-matrix \cite{sudarshan1961stochastic} for a map $\mathcal{E}(t_2,t_1)$ is given by \begin{align}
	\mathcal{\chi} = (\mathcal{E}(t_2,t_1) \otimes I)[\ket{\Psi}\!\bra{\Psi}],
	\label{eq:choi}
\end{align} where $\ket{\Psi} = \frac{1}{\sqrt{2}}(\ket{00}+\ket{11})$ is a maximally entangled state. When $\chi$ is negative, the channel $\mathcal{E}(t_2,t_0)$ is CP-indivisible,  in the sense that it is a CPTP map that cannot be written as composition of CP maps as in Eq.~(\ref{eq:divi}).  For bijective maps,  the above requirement is equivalent to the following condition for CP-divisibility 
\begin{align}
	\frac{d}{dt}\Vert [\mathbb{I}_{d+1} \otimes \mathcal{E}(t)](\rho_1-\rho_2) \Vert_1 \le 0
	\label{eq:CPindiv}
\end{align}
being always satisfied for given any two density operators $\rho_1$ and $\rho_2$ in the space of bounded operators of the bipartite system $\mathcal{B}(H_{d+1} \otimes H_d)$ \cite{chruscinski2018divisibility}. Here, $\Vert A \Vert_1 = {\rm Tr}(\sqrt{A^\dagger A})$ is the trace norm of an operator $A$. 
The channel $\mathcal{E}(t_2,t_0)$ satisfies P-divisibility if 
\begin{align}
	\frac{d}{dt}\Vert \mathcal{E}(t)[\rho_1 - \rho_2] \Vert_1 \le 0
	\label{eq:Pindiv}
\end{align}
for any initial pair of states $\rho_1$ and $\rho_2$.  Note that when the condition (\ref{eq:Pindiv}) breaks down, the intermediate map $\mathcal{E}(t_2,t_1)$ would no longer be positive, in the sense that it outputs a negative state. However, the full map $\mathcal{E}(t_2,t_0)$ is still CPTP but no longer P-divisible. 

The notion of an ``eternally non-Markovian'' (ENM) Pauli channels was proposed in \cite{hall2014canonical}  via the canonical master equation given by
\begin{align}
	\mathcal{L}(t)[\rho] = \sum_{j=1}^{3} \gamma_j(t) (\sigma_j \rho \sigma_j - \rho )    
	\label{eq:eternalmaster}
\end{align}
where $\sigma_j$ are the Pauli operators, and the canonical decay rates $\gamma_j(t)$ are given by $\gamma_1=\gamma_2=\frac{c}{2}$ and $\gamma_3(t) = - \frac{c}{2}\tanh t $, where $c$ is some real constant. This particular dynamics corresponds to CPTP map $\mathcal{E}(t)\rho = \sum_{j=1}^{3} E_j \rho E_j^\dagger$, with the Kraus operators given by 
\begin{equation}
	E_1 = \sqrt{2k}\sigma_I,~~ E_2 = \sqrt{k}\sigma_X; ~~E_3 = \sqrt{k}\sigma_y
	\label{eq:ENM}
\end{equation} 
where 
\begin{equation}
	k \equiv \frac{1}{4}(1-e^{-c t}).
	\label{eq:k}
\end{equation}
For Pauli channels, it is straightforward to arrive at Eq.~(\ref{eq:ENM}) from Eq.~(\ref{eq:eternalmaster}) from the so-called $H$-matrix method \cite{chruscinski2013non,chruscinski2015non}.  This example was later shown to arise under the mixing of two Pauli semigroups \cite{wudarski2016markovian}, providing an example of the fact that the space of CP-divisible Pauli channels is not convex \cite{wolf2008assessing}. This is an instance in which the criterion Eq. (\ref{eq:Pindiv}) for witnessing non-Markovianity  fails \cite{hall2014canonical}. Interestingly, in \cite{megier2017eternal} the authors studied whole families of eternal and quasi-eternal unital channels. ENM evolution arising from a convex combination of Markovian semigroups has recently been extended to the context of generalized Pauli channels in dimension $k$ \cite{siudzinska2020quantum}, where it is shown that $(d-1)^2$ out of $d^2-1$ decay rates can be always negative.

The examples of ENM evolution given above all have two broad restrictions: (1) the concept of non-Markovianity here refers to CP-indivisibility, and not the stronger condition of P-indivisibility; (2) they consider only unital-- specifically Pauli-- channels. Here we wish to address the 2nd of these points. So far as known to us, there are only a few works addressing similar questions about non-unital channels. Here we undertake to do this, studying in detail the generalized amplitude damping (GAD) channels \cite{nielsen2002quantum,srikanth2008squeezed}. 

Before embarking on our results, we present a notion of non-unital non-Markovianity as below.
Any $d$-dimensional dynamical map may be given a matrix representation as $\mathcal{E}(t) \rightarrow F(t)$ as 
\begin{equation}
	F(t)=\left(\begin{array}{c|c}
		1 & {\bm 0}_{1\times(d^{2}-1)}\\
		\hline \bm{\tau} & M
	\end{array}\right),
	\label{eq:mapmatrix}
\end{equation}
where $\bm{\tau} \in \mathbb{R}^{d^2 -1}$ and $M$ is a $d^2-1 \times d^2 -1$ real matrix. Given the state $\rho=\frac{1}{d} (\mathbb{I} + \sum_{i=1}^{d^2-1} r_i G_i $), where $G_i$ are orthonormal basis given by the generalized Gell-Mann matrices with $G_0 = \frac{\mathbb{I}}{\sqrt{d}}$ and the Hermitian $G_i$ where $i \in \{1,.....d^2 -1\}$, then the map $F_{ij}(t) := {\rm Tr}(G_i \mathcal{E}[G_j])$, which transforms a Bloch vector $\bm{r}$ as
\begin{align}
	\bm{r} \rightarrow \bm{r}^\prime(t) = M(t)\bm{r}(0) + \bm{\tau}(t). \label{eq:blochmap}
\end{align}
If $\bm{r_1}$ and $\bm{r_2}$ are two Bloch vectors corresponding to the states $\rho_1$ and $\rho_2$, then from Eq.~(\ref{eq:mapmatrix}) and (\ref{eq:blochmap}), the transformation $\mathcal{E}(t)[\rho_1 - \rho_2]$ corresponds to $M(t)[\bm{r_1} - \bm{r_2}]$. Therefore, the trace distance fails to witness non-Markovianity originating solely from $\bm{\tau}$. However, the full non-Markovianity, in the sense of P-indivisibility is encoded in the map $F(t)$. Interestingly, the divisibility property of $\mathcal{E}(t)$ carried over as $F(t) = F(t,s)F(s)$ leads to a non-trivial relationship between unital and non-unital parts of the channel:
\begin{align}
	M(t) = M(t,s) M(s) \quad \text{and} \quad \bm{\tau}(t) = \bm{\tau}(t,s) + M(t,s) \bm{\tau}(s).
	\label{eq:nontrivial}  
\end{align}
Therefore, the full P-indivisibility is captured by Eq.~(\ref{eq:nontrivial}).
\begin{definition}
	When non-Markovianity originates solely from the non-unital part, i.e., from $\bm{\tau}(t)$, then we call it \textit{non-unital non-Markovianity} and such a channel will be called purely non-unital non-Markovian channel.\label{def:def1}
\end{definition}Given this notion, we ask: is eternal non-unital non-Markovianity, as per the Definition~(\ref{def:def1}), possible? We answer this in the negative for a class of non-unital channels. However, we will later show that for a class of phase-covariant channels and higher dimensional non-unital channels, eternal non-Markovianity is still possible, nonetheless such a non-Markovianity originates from the unital part of the channel. 

This is work is organized as follows. In Section \ref{sec:impossible} we introduce a $d$-level GAD channel for $d$-dimensional systems and show that it cannot be eternally non-Markovian. In Section \ref{sec:utmost}, we further provide rigorous proof for the case of qubit GAD channel and show that the utmost one can achieve in this direction is a \textit{quasi}-eternal non-Markovian GAD channel, of which we explicitly construct a qubit example. In Section \ref{sec:general}, we point out that this negative result does not extend to non-unital channels in general, by providing an example of a non-unital channel that is ENM. We present our conclusions in Section \ref{sec:conclude}.

\section{Generalized amplitude damping and the impossibility of eternal non-Markovianity \label{sec:impossible}} 
\subsection{Qudit generalized amplitude damping}
A well-known exemplar of a qubit non-unital channel is the GAD channel \cite{nielsen2002quantum,banerjee2008symmetry}.  In this work, generalizing the well-known $d$-level amplitude damping channel \cite{dutta2016entanglement,fonseca2019high}, we introduce a $d$-level GAD channel $\mathcal{E}_G$ for $d$-dimensional systems given by 
\begin{equation}
	\mathcal{E}_G = \sum_{l=0}^{d-1} p_l \mathcal{E}_l, 
	\label{eq:gad2}
\end{equation}
where $p_l$ is the probability with which the $l^{\rm th}$ damping channel $\mathcal{E}_l$ acts on a $d$-dimensional system, along with the constraint that $\sum_l p_l = 1$. Here $\mathcal{E}_l$ damps an arbitrary input state $\rho$ to the state $\ket{l}\!\bra{l}$ with probability $\lambda$, and is defined by the Kraus representation $\mathcal{E}_l[\rho]  = \sum_{j=0}^{d-1} E_{l,j} \rho E_{l,j}^\dagger$ 

with  the Kraus operators in terms of damping parameter $\lambda$ as  
\begin{align}
	E_{l,l} &= \ket{l}\!\bra{l} + \sqrt{1-\lambda}\sum_{\substack{k=0 \\ k\ne l}}^{d-1} \ket{k}\!\bra{k} ;\nonumber\\
	E_{l,j} &= \sqrt{\lambda}\ket{l}\!\bra{j},\quad \quad (l \ne j) . 
	\label{eq:gad1}
\end{align} 
A key point about the damping channels is that they have a unique fixed point $\mathfrak{F}$. For the amplitude damping channels $\mathcal{E}_l$ it is $ \ket{l}\!\bra{l}$, and for the higher-dimensional GAD channel the fixed point is $\mathfrak{F} \equiv \sum_{l=0}^{d-1} p_l \ket{l}\!\bra{l}$. 

We have the following result. 
\begin{Theorem}
	A non-unital GAD channel in any finite dimension $d$ that is eternally CP-indivisible in the non-unital part is impossible.
	\label{thm:cpindivD}
\end{Theorem}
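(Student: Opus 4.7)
The plan is to show that the hypothesis of purely non-unital, eternal CP-indivisibility is self-contradictory for the GAD, splitting the argument into $d=2$ and $d\geq 3$. From the Kraus operators in Eq.~\eqref{eq:gad1} one computes $\mathcal{E}_G(\ket{k}\!\bra{k})=(1-\lambda)\ket{k}\!\bra{k}+\lambda\mathfrak{F}$ and, for $k\neq k'$, $\mathcal{E}_G(\ket{k}\!\bra{k'})=f_{k,k'}(\lambda)\,\ket{k}\!\bra{k'}$ with $f_{k,k'}(\lambda):=(1-\lambda)(1-p_k-p_{k'})+\sqrt{1-\lambda}\,(p_k+p_{k'})$; reading off the Bloch form in Eq.~\eqref{eq:mapmatrix}, $M(t)$ is block-diagonal in the generalized Gell--Mann basis and $\bm{\tau}(t)=\lambda(t)\bm{r}^{\mathfrak{F}}$ lies along the Bloch vector of the common fixed point $\mathfrak{F}$.

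In the qubit case $d=2$, the unique off-diagonal pair has $p_0+p_1=1$, so $f_{0,1}(\lambda)=\sqrt{1-\lambda}$, and a direct calculation shows that $\mathcal{E}_G(t_2,t_1)$ is itself a GAD with damping $\lambda':=(\lambda_2-\lambda_1)/(1-\lambda_1)$. CP-divisibility of the unital part $M(t_2)M(t_1)^{-1}$ is equivalent to $\lambda(t)$ being non-decreasing, hence to $\lambda'\in[0,1]$, which makes the full intermediate map CP. Thus unital CP-divisibility implies full CP-divisibility, and purely non-unital non-Markovianity is impossible at any single time, let alone eternally.

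For $d\geq 3$ the intermediate map is no longer a GAD, and the plan is instead to derive a necessary condition for unital CP-divisibility directly from the Choi matrix of the pure-unital intermediate map $(M(t_2)M(t_1)^{-1},\mathbf{0})$. Its $2\times 2$ coherence-block submatrices yield the pairwise Cauchy--Schwarz conditions $g_{k,k'}^2\leq(\mu+(1-\mu)/d)^2$, where $g_{k,k'}:=f_{k,k'}(\lambda_2)/f_{k,k'}(\lambda_1)$ and $\mu:=(1-\lambda_2)/(1-\lambda_1)$. Expanding to leading order as $\mu\to 1^{-}$, the sign of the margin $(\mu+(1-\mu)/d)-|g_{k,k'}|$ is controlled by $2a\sqrt{1-\lambda_1}-b(d-2)$ with $a=1-p_k-p_{k'}$ and $b=p_k+p_{k'}$, so the pairwise condition fails on every pair with $b>2/d$. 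The main step, and the main obstacle, is the combinatorial observation that the mean of $b$ over all $\binom{d}{2}$ pairs equals exactly $2/d$ (since each $p_l$ appears in $d-1$ pairs and $\sum_lp_l=1$), with equality across all pairs forcing $p_l=1/d$. Hence any non-uniform $\{p_l\}$---i.e., any genuinely non-unital GAD---admits at least one pair with $b>2/d$, making unital CP-divisibility fail on every infinitesimal forward time step from any $t_1$ with $\lambda(t_1)<1$; ``eternally unital CP-divisible'' therefore cannot hold, ruling out eternal purely non-unital non-Markovianity and completing the proof.
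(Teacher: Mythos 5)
Your route is entirely different from the paper's, and both halves contain genuine problems. In the $d=2$ half, the step ``$\lambda'\in[0,1]$ \ldots\ which makes the full intermediate map CP'' is false: the intermediate map is a GAD with parameters $(\lambda',p')$, and complete positivity additionally requires $p'\in[0,1]$, equivalently $|\tau_3(t_2,t_1)|\le\lambda'$, which your argument never checks. The paper's own construction in Sec.~\ref{sec:utmost} is a direct counterexample to your claim: there $\lambda(t)=1-e^{-\nu t}$ is strictly increasing (so $\lambda'\in[0,1]$ and the unital part of every intermediate map is fine), yet $\gamma_1(t)<0$ for all $t>t^\ast$ because $p(t)$ decreases, so the channel is CP-indivisible through its non-unital part alone. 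Your conclusion that ``purely non-unital non-Markovianity is impossible at any single time'' therefore proves too much; only the \emph{eternal} version is impossible. In the $d\ge3$ half, the Cauchy--Schwarz computation appears algebraically sound (indeed it shows the $2\times2$ condition fails for \emph{every} pair once $\lambda_1>0$, even for uniform $p_l$), but what it establishes is only that the auxiliary object $(M(t_2)M(t_1)^{-1},\mathbf{0})$ is not a CP map. That object is not a physical channel, and there is no reason the unital part of the intermediate map of a perfectly Markovian non-unital evolution should be CP; showing it never is does not locate the indivisibility of the \emph{actual} intermediate map $(M(t_2)M(t_1)^{-1},\bm{\tau}(t_2,t_1))$ in one part or the other. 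It only shows that your formalization of ``CP-indivisible in the non-unital part'' is vacuous for $d\ge3$, which is not the content of the theorem.

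The idea your proof is missing is the boundary condition at $t=0$, which is essentially the whole of the paper's argument. Because the Kraus form forces $\lambda(0)=0$ and $\lambda(t)\ge0$, one has $\dot{\lambda}|_{t=0^+}\ge0$ (Lemma~\ref{lem:one}). The paper then picks the witness pair $\rho_1=\mathbb{I}\otimes\rho_S$ and $\rho_2=\mathbb{I}\otimes\mathfrak{F}$, with $\mathfrak{F}$ the fixed point---the pair whose separation is governed by the non-unital drift---and uses $\Vert\mathcal{E}(\rho_S)[t]-\mathfrak{F}\Vert=\sqrt{1-\lambda(t)}\,\Vert\mathfrak{G}\Vert$ to conclude that this distance cannot increase at $t=0$, so the divisibility condition Eq.~(\ref{eq:CPindivreq}) cannot be violated at $t=0^+$. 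Any violation can therefore only begin at some $t^\ast>0$: quasi-eternal non-Markovianity survives, eternal non-Markovianity does not. Your argument never invokes $\lambda(0)=0$, which is precisely why it cannot separate ``eternal'' from ``quasi-eternal'' and ends up contradicting the paper's Sec.~\ref{sec:utmost}.
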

\begin{proof}
	In the Markovian case, any initial state is driven monotonically towards the fixed point $\mathfrak{F}$, fulfilling Eq. (\ref{eq:CPindiv}) for all pair of states $\rho_1$ and $\rho_2$.
	If this CP-divisibility condition is violated at time $t=0$, then there is a pair of states such that 
	\begin{align}
		\frac{d}{dt}\Vert [\mathbb{I} \otimes \mathcal{E}(t)] (\rho_1-\rho_2) \Vert_{ t=0} > 0.
		\label{eq:CPindivreq}
	\end{align}
	With the idea of maximizing the l.h.s in Eq. (\ref{eq:CPindivreq}), we let $\rho_2 \equiv \mathbb{I} \otimes \mathfrak{F}$  and $\rho_1 \equiv \mathbb{I} \otimes \rho_S$ for an arbitrary system state $\rho_S$. 
	
	It can be shown that the action of the GAD channel Eq. (\ref{eq:gad2}) is described by
	\begin{equation}
		\mathcal{E}(\rho_S)[t] = \lambda(t) \mathfrak{F} + \{(1-\lambda(t))\mathfrak{F} + \sqrt{1-\lambda(t)}\mathfrak{G}\},
		\label{eq:Egad}
	\end{equation}
	where $\mathfrak{G}$ is a matrix consisting only of off-diagonal terms dependent on the initial state. In the Markovian case, $\lambda(t)$ increases monotonically from 0 to 1, and thus $\mathcal{E}(\rho_S)[t]$ evolves monotonically from the initial state to $\mathfrak{F}$. Note that this corresponds manifestly to non-unital behavior. In specific, $\mathcal{E}(\mathfrak{F}) = \mathfrak{F}$ at all times $t$, as required by the definition of a fixed point. We have from Eq. (\ref{eq:Egad}) that
	\begin{equation}
		\Vert\mathcal{E}(\rho_S)[t] - \mathfrak{F}\Vert = \sqrt{1-\lambda(t)} \Vert\mathfrak{G}\Vert.
		\label{eq:sqrt}
	\end{equation}
	Given that GAD constitutes a CP map, $\lambda(t) \ge 0$ such that $\lambda(0)=0$. Thus, we have that $\dot{\lambda}|_{t=0}\ge0$. Then it follows from Eq. (\ref{eq:sqrt}) that
	\begin{align}
		\frac{d}{dt}\Vert \mathcal{E}({\rho}_S) - \mathfrak{F} \Vert_{ t=0} \le 0,
		\label{eq:thm}
	\end{align}
	contrary to the implication of Eq. (\ref{eq:CPindivreq}).
\end{proof}
In the following, the above result for the general case of qudits will be elucidated in the case of qubits. 
\subsection{Case of qubits}
In the case of qubits, we can illustrate the above results for maps in terms of implications for the decoherence rates. 
\begin{align}
	E_{1}&=\sqrt{1-p}\left[\begin{array}{cc}
		1 & 0\\
		0 & \sqrt{1-\lambda}
	\end{array}\right] ; \; E_{2} =\sqrt{1-p} \left[\begin{array}{cc}
		0 & \sqrt{\lambda}\\
		0 & 0
	\end{array}\right]; \nonumber \\
	E_{3}&=\sqrt{p}\left[\begin{array}{cc}
		\sqrt{1-\lambda} & 0\\
		0 & 1
	\end{array}\right] ; \; E_{4} =\sqrt{p}\left[\begin{array}{cc}
		0 & 0\\
		\sqrt{\lambda} & 0
	\end{array}\right],
	\label{eq:kraus_operators}
\end{align}
where $\lambda, p \in [0,1]$.  

A key point is that the form Eq. (\ref{eq:kraus_operators}) constrains the Markovian behavior of the channel, even without knowing details of the time dependence of the parameters.

\begin{lemma}
	\label{lem:one}
	The damping parameter $\lambda(t)$ of the GAD channel is such that
	\begin{equation}
		\lim_{t \rightarrow 0+} \dot{\lambda}\ge0,
		\label{eq:llimit}
	\end{equation}
	meaning that for sufficiently small time $t$, the variable $\lambda(t)$ cannot decrease.
\end{lemma}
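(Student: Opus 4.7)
The plan is to derive the claimed inequality directly from the structure of the Kraus operators in Eq.~(\ref{eq:kraus_operators}), together with the initial condition $\mathcal{E}(0)=\mathbb{I}$ and the requirement that $\mathcal{E}(t)$ be a valid CPTP map for all $t\geq 0$. First I would observe that at $t=0$ the channel must reduce to the identity map, and inspection of Eq.~(\ref{eq:kraus_operators}) shows that the off-diagonal ``jump'' operators $E_{2}$ and $E_{4}$ are proportional to $\sqrt{\lambda}$, which must vanish when the channel is the identity. Hence $\lambda(0)=0$. For the Kraus representation to be well-defined and CP at later times, the diagonal entries $\sqrt{1-\lambda}$ in $E_{1},E_{3}$ and the factors $\sqrt{\lambda}$ in $E_{2},E_{4}$ force $\lambda(t)\in[0,1]$; equivalently, the Choi matrix built from Eq.~(\ref{eq:kraus_operators}) is positive semi-definite precisely on this interval.

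Given these two facts, the conclusion follows from an elementary calculus observation: if a real function $f:[0,\infty)\to\mathbb{R}$ satisfies $f(0)=0$ and $f(t)\geq 0$ for all $t\geq 0$, then its right derivative at the origin, when it exists, must be non-negative. Applying this to $\lambda$ yields Eq.~(\ref{eq:llimit}). This is in fact the same reasoning already used in passing inside the proof of Theorem~\ref{thm:cpindivD} (``$\lambda(t)\geq0$ with $\lambda(0)=0$, hence $\dot{\lambda}|_{t=0}\geq0$''), so the role of Lemma~\ref{lem:one} is to isolate this observation for reuse in the qubit analysis that follows.

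The only real subtlety, and therefore the place where any ``obstacle'' lies, is regularity: strictly speaking one needs $\lambda(t)$ to admit a right derivative at $0$. The cleanest way to handle this is to appeal to the ambient assumption that the channel is generated by a canonical master equation of the form Eq.~(\ref{eq:master}), which forces enough smoothness for the limit in Eq.~(\ref{eq:llimit}) to exist. Alternatively, one can state the conclusion in terms of $\liminf_{t\to 0+}(\lambda(t)-\lambda(0))/t\geq 0$, which requires no regularity at all beyond $\lambda(0)=0$ and $\lambda(t)\geq 0$, and which is sufficient for the subsequent argument that $\lambda$ cannot decrease on a sufficiently small right neighbourhood of $t=0$.
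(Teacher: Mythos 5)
Your proposal is correct and follows essentially the same route as the paper: both deduce $\lambda(0)=0$ from the identity-map condition at $t=0$ and $\lambda(t)\ge 0$ from validity of the Kraus form, then conclude non-negativity of the right derivative at the origin. Your added remark on regularity (or the $\liminf$ reformulation) is a reasonable refinement of a point the paper leaves implicit, but it does not change the argument.
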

\begin{proof}
	The functional form of Eq. (\ref{eq:kraus_operators}) implies that 
	\begin{align}
		\lambda(t=0) = 0 \;\; \text{and} \;\; 
		\forall_{t\ge0} ~\lambda(t) \ge 0,
		\label{eq:limit}
	\end{align}
	where the first equation is a consequence of requiring that a channel must reduce to the identity map $I$ at $t=0$, and the second that of requiring that the map specified by Eq. (\ref{eq:kraus_operators}) must be complete. 
	Eq. (\ref{eq:limit})  implies Eq. (\ref{eq:llimit}).
\end{proof} 
\color{black}

It is well known that trace-distance can witness P-indivisibility. Our first result is to show that GAD cannot manifest eternal P-indivisibility. To this end, for the above map we have the affine representation  
\begin{align}
	\vec{r}^\prime (t) = M(t) \vec{r}(0) + \vec{\tau}(t)
	\label{eq:affine-rep}
\end{align}where $M$ is a $3 \times 3$ real matrix which modifies the Bloch vector $\vec{r} = (r_1,r_2,r_3)^T$, and $\vec{\tau} = (\tau_1,\tau_2,\tau_3)^T$ is the shift vector representing the displacement of Bloch vector towards the fixed point of the non-unital channel under consideration. Now, for a GAD channel we have
\begin{align}
	M &= \left(
	\begin{array}{ccc}
		\sqrt{1-\lambda} & 0 & 0 \\
		0 & \sqrt{1-\lambda} & 0 \\
		0 & 0 & 1-\lambda \\
	\end{array}
	\right); \quad
	\vec{\tau} =\left(
	\begin{array}{c}
		0 \\
		0 \\
		(1-2p)\lambda\\
	\end{array}
	\right).
	\label{eq:affine_GAD}
\end{align} 
The non-Markovianity can arise from the unital part (the matrix $M$) or the non-unital part (the shift $\vec{\tau}$).
For the most part in this paper, we set the notation $p(t)=p$ and $\lambda(t)=\lambda$ for simplicity.

The block diagonal representation of a dynamical map provides a straightforward way of connecting trace distance with the decay rates \cite{hall2014canonical}, hence we make use of this feature here.  Eq. (\ref{eq:affine-rep}) obeys the canonical master equation in the Block diagonal representation
\begin{align}
	\frac{d\vec{r}(t)}{dt} = D(t) \vec{r}(t) + \vec{\mu}(t)
\end{align}
where $D(t)$ and $\vec{\mu}(t)$ are called damping matrix and drift vector respectively. In relation with Eq. (\ref{eq:affine-rep}) 
we have $
\frac{dM(t)}{dt} = D(t) M(t) \quad \implies \quad M(t) = e^{\int_{t_0}^{t}D(s) ds}.
$
The channel $\mathcal{E}$ is CP-divisible when $\text{Tr}[D(t)] < 0$ and P-divisible when $h_{\rm max}(D(t)+D(t)^{\rm T})< 0$ \cite{hall2014canonical}, where $h_{\rm max}(X)$ is the largest eigenvalue of the matrix $X$.

\begin{Theorem}
	A qubit GAD channel that is eternally non-Markovian as witnessed by the trace-distance measure is impossible.
	\label{thm:pindiv}
\end{Theorem}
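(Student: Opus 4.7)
The plan is to work in the affine/block representation already set up in the excerpt and reduce the P-divisibility condition to a sign condition on $\dot{\lambda}$. First, from $\dot{M}(t)=D(t)M(t)$ and the explicit diagonal $M(t)=\mathrm{diag}(\sqrt{1-\lambda},\sqrt{1-\lambda},1-\lambda)$ given in Eq.~(\ref{eq:affine_GAD}), I would compute
\begin{equation}
D(t) = \mathrm{diag}\!\left(-\frac{\dot{\lambda}}{2(1-\lambda)},\,-\frac{\dot{\lambda}}{2(1-\lambda)},\,-\frac{\dot{\lambda}}{1-\lambda}\right),
\end{equation}
so that $D+D^{\mathrm{T}}$ is diagonal with entries $-\dot{\lambda}/(1-\lambda)$ (twice) and $-2\dot{\lambda}/(1-\lambda)$. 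This immediately identifies $h_{\max}(D+D^{\mathrm{T}})$ in terms of the sign of $\dot{\lambda}$: it is strictly negative iff $\dot{\lambda}>0$, zero iff $\dot{\lambda}=0$, and strictly positive iff $\dot{\lambda}<0$. Hence, by the trace-distance criterion quoted from \cite{hall2014canonical}, the qubit GAD channel is P-indivisible at time $t$ precisely when $\dot{\lambda}(t)<0$.

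The second step is to turn ``eternally non-Markovian in the trace-distance sense'' into the requirement $\dot{\lambda}(t)<0$ for all $t>0$, and then derive a contradiction. The cleanest way is to integrate: if $\dot{\lambda}(s)<0$ for every $s\in(0,t]$, then $\lambda(t)-\lambda(0)=\int_0^t\dot{\lambda}(s)\,ds<0$, so $\lambda(t)<0$, which violates the completeness/positivity constraint $\lambda(t)\ge0$ built into the Kraus form Eq.~(\ref{eq:kraus_operators}). Alternatively, one can invoke Lemma~\ref{lem:one} directly: since $\lim_{t\to0^+}\dot{\lambda}\ge0$, there is a right-neighborhood of $0$ in which $\dot{\lambda}$ cannot be strictly negative, so the channel fails to be P-indivisible on that interval, contradicting eternality.

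Putting these together yields Theorem~\ref{thm:pindiv}. The only subtle point, and probably the main obstacle to presenting the argument crisply, is making precise what ``eternal'' means at the boundary $t=0^+$: one must be clear that the non-Markovianity condition $\dot{\lambda}<0$ has to hold on an open interval $(0,\epsilon)$, and that the completeness of $\mathcal{E}(t)$ together with $\lambda(0)=0$ forces $\lambda$ to be non-negative on any such interval. Once this is granted, the argument is essentially a one-line contradiction either from the integral identity above or from Lemma~\ref{lem:one}. No appeal to the explicit structure of $\bm{\tau}(t)$ is needed, since the trace-distance witness depends only on the unital block $M(t)$, and the whole of $M(t)$ is governed by $\lambda(t)$.
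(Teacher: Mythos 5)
Your proposal is correct and follows essentially the same route as the paper: compute $D(t)$ from the affine representation, evaluate $h_{\max}(D+D^{\mathrm{T}})$ in terms of $\dot{\lambda}/(1-\lambda)$, and invoke Lemma~\ref{lem:one} (i.e., $\dot{\lambda}\ge 0$ near $t=0$) to rule out P-indivisibility at early times. Your additional integration argument and the care about the meaning of ``eternal'' at $t=0^+$ are a slightly more explicit rendering of the same idea; the paper simply states the $t\to 0$ limit and cites the lemma.
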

\begin{proof}
	For the qubit GAD channel Eq.~(\ref{eq:kraus_operators}), we have:
	\begin{equation}
		D(t) = \begin{pmatrix}
			\frac{-\dot{\lambda}}{2(1-\lambda)} & 0 & 0 \\
			0 & \frac{-\dot{\lambda}}{2(1-\lambda)} & 0 \\
			0 & 0 & \frac{-\dot{\lambda}}{1-\lambda} 
		\end{pmatrix}.
	\end{equation}
	At time $t \rightarrow 0$, we have $h_{\rm max}(D(t)+D(t)^{\rm T})= \frac{-\dot{\lambda}}{1-\lambda}$, which by virtue of Lemma \ref{lem:one} is negative.
\end{proof}

We also have that in the limit $t \rightarrow 0$, $\textrm{Tr}[D(t)] = \frac{-2\dot{\lambda}}{1-\lambda} < 0$ on account of Eq. (\ref{eq:llimit}). Therefore, the Bloch volume measure   \cite{lorenzo2013geometrical, hall2014canonical} also is unable to indicate an ENM property. 

One might suppose that even though the GAD channel cannot be eternally P-indivisible, it could still be eternally CP-indivisible. Our next result is that even this is ruled out, because the channel must start off as Markovian. 
\begin{Theorem}
	A qubit GAD channel that is eternally CP-indivisible is impossible.
	\label{thm:cpindiv}
\end{Theorem}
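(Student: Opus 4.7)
The plan is to reduce the theorem to checking CP-divisibility of the channel at the single instant $t = 0^+$, and then to exploit Lemma~\ref{lem:one} together with the initial condition $\lambda(0)=0$. Recall that a qubit map is CP-divisible at time $t$ if and only if the three canonical decay rates $\gamma_j(t)$ in Eq.~(\ref{eq:master}) are non-negative. Consequently, eternal CP-indivisibility demands at least one $\gamma_j(t)<0$ for every $t>0$, so it suffices to exhibit a single time at which all three decay rates are non-negative in order to falsify this.

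First I would compute the generator $\mathcal{L}(0) = \dot{\mathcal{E}}(0)$ acting on the Bloch vector. Because $\lambda(0) = 0$ makes $M(0)=I$ and $\vec\tau(0)=0$, only the first derivatives at the origin are needed. Differentiating Eq.~(\ref{eq:affine_GAD}) at $t=0$ gives
\begin{equation}
\dot M(0) = -\dot\lambda(0)\,\mathrm{diag}\!\left(\tfrac{1}{2},\tfrac{1}{2},1\right), \qquad \dot{\vec\tau}(0) = \bigl(0,\,0,\,(1-2p(0))\,\dot\lambda(0)\bigr)^T,
\end{equation}
in which any dependence on $\dot p(0)$ drops out because it multiplies $\lambda(0)=0$. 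Hence $\mathcal{L}(0)$ coincides with the generator of a time-independent GAD semigroup parametrised by $p(0)$ and $\dot\lambda(0)$.

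Next I would recast this generator in GKSL form and read off the canonical rates. Matching the Bloch equations with the standard thermal dissipator built from jump operators $\sigma_\pm$ yields
\begin{equation}
\gamma_\downarrow(0) = (1 - p(0))\,\dot\lambda(0), \qquad \gamma_\uparrow(0) = p(0)\,\dot\lambda(0),
\end{equation}
and no dephasing rate. By Lemma~\ref{lem:one} one has $\dot\lambda(0)\ge 0$, while $p(0)\in[0,1]$, so both rates are non-negative. The generator is therefore of Lindblad form at $t=0$, which means $\mathcal{E}(t,0)$ is CP-divisible at the origin, contradicting any claim of eternal CP-indivisibility.

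The main obstacle I anticipate is a mild regularity issue: the canonical expression $\mathcal{L}(t)=\dot{\mathcal{E}}(t)\mathcal{E}^{-1}(t)$ is delicate where $\mathcal{E}(t)$ is close to the identity, so some care is needed to ensure the rates extend continuously to $t=0$. This is circumvented here because $\mathcal{L}(0)$ is finite and determined solely by $\dot M(0)$ and $\dot{\vec\tau}(0)$, which are well-defined whenever $\lambda(t)$ and $p(t)$ are differentiable at $t=0$. One might also worry that higher-order terms in $t$ could flip the sign of some $\gamma_j(t)$ at later times, but eternal CP-indivisibility requires non-Markovianity for every $t>0$, including arbitrarily small $t$, so a single instant of CP-divisibility already suffices.
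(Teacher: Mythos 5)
Your proposal is correct and follows essentially the same route as the paper: you evaluate the canonical decay rates of the GAD generator in the limit $t\to 0^{+}$, observe that the $\dot p$-dependent terms vanish because $\lambda(0)=0$, and conclude via Lemma~\ref{lem:one} that both rates reduce to non-negative products of $\dot\lambda(0)$ with a probability, so the channel is CP-divisible near the origin. The only cosmetic difference is that you extract the rates from the affine (Bloch) representation rather than quoting Eq.~(\ref{eq:ratesgeneric}) directly, which amounts to the same computation.
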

\begin{proof}
	For the qubit GAD channel, the canonical master equation is
	\begin{align}
		\dot{\rho}(t) &= \mathcal{L}[\rho(t)] \nonumber \\
		&= \gamma_1(t)[\sigma_{-} \rho(t) \sigma_{+} - \frac{1}{2} \{\sigma_{+}\sigma_{-},\rho(t)\}] \nonumber \\ 
		&+ \gamma_2(t)[\sigma_{+} \rho(t) \sigma_{-} - \frac{1}{2}\{\sigma_{-} \sigma_{+} ,\rho(t)\}]
		\label{eq:master}
	\end{align}
	where $\sigma_{+}=\ket{1}\bra{0}$ and $\sigma_{-}=\ket{0}\bra{1}$ are the raising and lowering operators. The canonical decay rates are given by:
	\begin{align}
		\gamma_1(t) &= \lambda  \dot{p}+\frac{\dot{\lambda} p}{1-\lambda},
		\nonumber \\
		\gamma_2(t) &= \lambda  \dot{q}+\frac{\dot{\lambda} q}{1-\lambda},
		\label{eq:ratesgeneric}
	\end{align}
	where $p+q=1$. We thus have $\gamma_1 + \gamma_2 = \frac{\dot{\lambda}}{1-\lambda}$ in as much as $\dot{p} = -\dot{q}$. Eqs. (\ref{eq:ratesgeneric}) entail that 
	\begin{align}
		\lim_{t \rightarrow 0+} \gamma_1(t) &= \dot{\lambda} p,
		\nonumber \\
		\lim_{t \rightarrow 0+} \gamma_2(t) &= \dot{\lambda} (1-p).
		\label{eq:ratesgeneric+}
	\end{align}
	It then follows by virtue of Eqs. (\ref{eq:llimit}) and (\ref{eq:ratesgeneric+}), and noting that $p$ is a probability, that $\lim_{t \rightarrow 0+} \gamma_j \ge0$ ($j=1,2$).  
\end{proof}
\bigskip

\section{Quasi-eternally non-Markovian GAD channel \label{sec:utmost}}

Given the impossibility of an ENM qubit GAD channel, the utmost that can be achieved in this direction is quasi-eternal non-Markovianity, whereby one of the rates becomes negative after a finite time $t_0$, and remains negative forever thereafter. The closest approximation to eternal non-Markovianity would then be by letting $t_0 \rightarrow 0$. An example of this sort is discussed below. To this end, we introduce an eternal ``unmixing'' effect, i.e., having $\dot{p}(t)<0$ for all times. In that spirit, in Eq. (\ref{eq:kraus_operators}), let
\begin{align}
	p(t) = \frac{e^{- m t}}{n} \;\; ; \;\; \lambda(t) = 1- e^{- \nu t} 
	\label{eq:forms}
\end{align}
where $m, n$ and $\nu$ are some real positive constants.
From Eq. (\ref{eq:ratesgeneric}), we find that  
\begin{subequations}
	\begin{align}
		\gamma_1(t) &=\frac{1}{n} e^{-m t} \left(m \left(e^{-\nu t}-1\right)+\nu \right) \label{eq:rate1} \\
		\gamma_2(t) &=\nu-\gamma_1(t) \label{eq:rate2}
	\end{align}
	\label{eq:rate12}
\end{subequations}
are the canonical decay rates, which are plotted in Figure (\ref{fig:rates}) for the values $m=3,n=2$ and $\nu=1$.   

Decoherence rate $\gamma_2$ remains positive throughout. The initial time when rate $\gamma_1$ turns negative is found using Eq. (\ref{eq:rate1}) to be 
\begin{equation}
	t^\ast = \frac{1}{\nu} \log\left(\frac{m}{ m - \nu }\right).
	\label{eq:tast}
\end{equation}
Note that since $m>\nu$, thus there is an initial interval given by $0 \le t < t^\ast$, during which $\gamma_1$ is necessarily positive. In Eq. (\ref{eq:rate1}), the term $(m(e^{-\nu t}-1)+\nu) \rightarrow \nu - m$, in the limit $t \rightarrow \infty$. Therefore the decay rate approaches $0^{-}$, provided $m >\nu $. This provides the condition for the quasi-eternal non-Markovianity of the channel. The symmetry between the two rates is a consequence of Eq. (\ref{eq:rate2}), and more generally, Eq. (\ref{eq:ratesgeneric}), and is reflected in Figure \ref{fig:rates}.

\begin{figure}[ht!]
	\centering
	\includegraphics[width=0.8\textwidth]{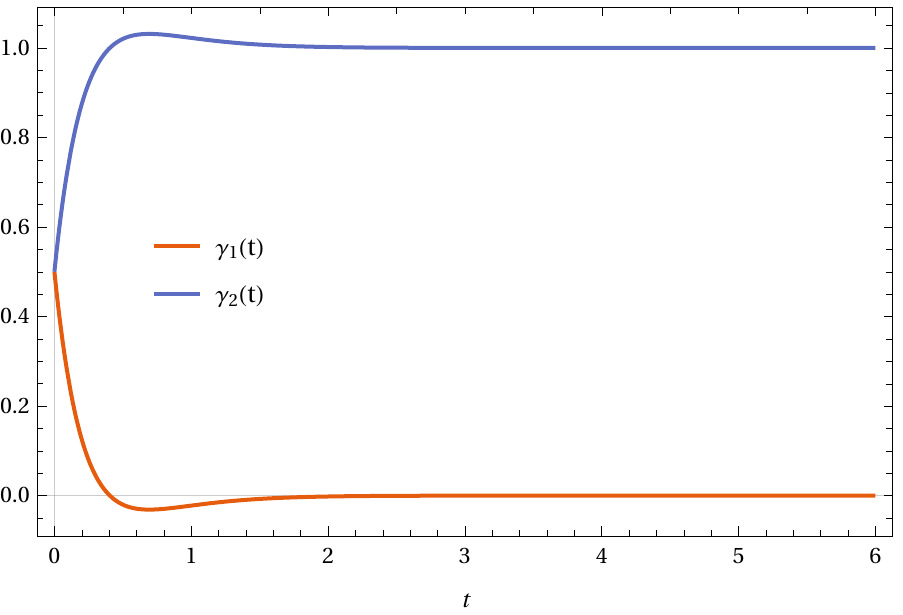}
	\caption{(Color online) The decay rates for the quasi-eternal non-Markovian GAD channel described by the Lindblad rates Eq. (\ref{eq:rate12}), with values $m:=3, n:= 2$ and $\nu:=1$. Whilst rate $\gamma_2$ is always positive, rate $\gamma_1$ is positive until time $t^\ast$, Eq. (\ref{eq:tast}), before becoming negative, and thereafter remains negative throughout thereafter. It is a consequence of Eq. (\ref{eq:tast}) that $t^\ast$ can be made arbitrarily close to 0, but not 0 itself, by virtue of Theorem 1. The symmetry between the two decay rates is a consequence of the fact, which follows from Eqs. (\ref{eq:rate1}) and \ref{eq:rate2}), that $\gamma_1(t)+\gamma_2(t) = \nu$.}
	\label{fig:rates}
\end{figure}
This channel by design is P-divisible in the sense that the intermediate map $F(t,s)$, according to Eq.~(\ref{eq:nontrivial}), is positive at all times and thus will be indicated to be non-Markovian neither by a distinguishability criterion such as trace distance \cite{liu2013nonunital}, nor by a criterion based on entropic quantities \cite{megier2021entropic}. On the other hand, the occurrence of negative $\gamma_1$ implies that it is CP-indivisible. The non-Markovianity of the channel may be quantified using a measure due to Hall-Cresser-Li-Anderson \cite{hall2014canonical}, which is given by $\mathcal{\xi}_{\rm \small HCLA} := - \int_{t : \gamma(t) < 0 }  \gamma(t) dt $.
In the context of this paper, $\gamma_1(t) < 0$ for all $t > t^\ast$, hence we find that
\begin{align}
	\mathcal{\xi}_{\rm \small HCLA} &:= - \int_{{t^\ast} }^{\infty} \gamma_1(t) dt  \nonumber \\
	&= \frac{\nu}{(\nu + m)n}\left(-\frac{m}{\nu - m}\right)^{-\frac{\nu + m}{\nu}}.
\end{align}
where $t^\ast$ is given in by Eq. (\ref{eq:tast}). The HCLA measure is plotted in Fig. (\ref{fig:hcla}).

\begin{figure}
	\centering
	\includegraphics[width=0.8\textwidth]{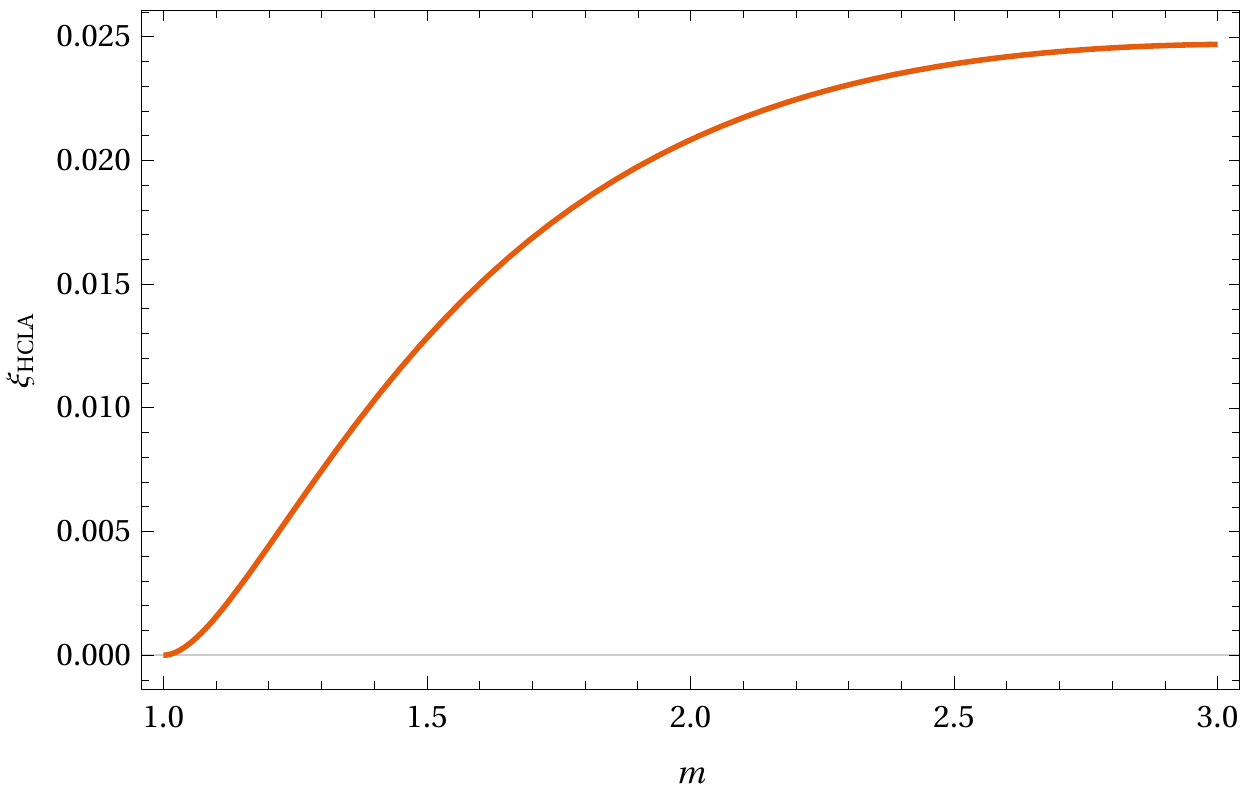}
	\caption{(Color online) The HCLA measure of QENMGAD channel (\ref{eq:kraus_operators}) against $m$.}
	\label{fig:hcla}
\end{figure}

\section{Non-unital channels and eternal non-Markovianity \label{sec:general}}
The impossibility of ENMity of GAD provokes the question of whether this is generic for all qubit non-unital channels. Yet, a simple example suffices to show that this is not the case. Consider the non-unital channel generated by linearly combining the generators the above Pauli ENM channel and an amplitude damping (AD) channel. An example would be:
\begin{align}
	\mathcal{L}(\rho) &= [\sigma_X\rho \sigma_X - \rho] + [\sigma_Y\rho \sigma_Y - \rho]  - \tanh(t)[\sigma_Z\rho \sigma_Z - \rho] \nonumber \\
	&+ \gamma(t)[\sigma_- \rho \sigma_+ - \frac{1}{2}\{\sigma_+ \sigma_-, \rho\}],
	\label{eq:non-unital}
\end{align}
where $\gamma(t)$ is arbitrary decoherence rate for an AD channel. The action of a unital channel leaves the identity operator unchanged, whence $\dot{\rho} = \mathcal{L}_{\rm unital}(\rho)=0$ if $\rho \propto \mathbb{I}$. The non-unitality of the channel described in Eq. (\ref{eq:non-unital}) follows,  noting that $L(\mathbb{I}) \ne 0$.
Employing the notation $\sigma_\mp \equiv \frac{1}{{2}}(\sigma_X \pm i \sigma_Y)$, the above equation can be rewritten as
\begin{align}
	\mathcal{L}(\rho) = \sum_{j,k} d_{j,k} (\sigma_j \rho \sigma_k - \frac{1}{2}\{\sigma_k\sigma_j,\rho\}),
\end{align}
where the decoherence matrix ${\bf d} \equiv \{d_{j,k}\}$ \cite{hall2014canonical} is given by
\begin{equation}
	{\bf d} = \begin{pmatrix}
		1+\frac{\gamma}{2} & -i\frac{\gamma}{4} & 0 \\
		i\frac{\gamma}{4} & 1+\frac{\gamma}{2} & 0 \\
		0 & 0 & -\tanh(t)
	\end{pmatrix}
\end{equation}
The eigenvalues of {\bf d}, which correspond to the canonical rates, are found to be $1+\gamma, 1,-\tanh(t)$. The last eigenvalue obviously entails ENM-ity. Note that this example can be readily extended by using the GAD channel instead of the AD channel. 

Ref. \cite{filippov2020phase} points out an example of non-unital ENM channel  belonging to the class of phase covariant qubit channels $\mathcal{E}_{\rm pc}$, which are characterized by the property that $e^{i\mathcal{E} \sigma_z}\mathcal{E}_{\rm pc}[\rho]e^{-i\mathcal{E}\sigma_z} = \mathcal{E}_{\rm pc}[e^{i\mathcal{E} \sigma_z}\rho e^{-i\mathcal{E}\sigma_z}]$.  The generator has the form 
$
\mathcal{L}_{\rm pc} \equiv \gamma_z (t)[\sigma_z\rho \sigma_z - \rho] 
+ \gamma(t)[\sigma_+ \rho \sigma_- - \frac{1}{2}\{\sigma_- \sigma_+, \rho\}]
+ \gamma(t)[\sigma_- \rho \sigma_+ - \frac{1}{2}\{\sigma_+ \sigma_-, \rho\}],
$
where it can be arranged so that $\gamma_z < 0$ for all $t > 0$ \cite{filippov2020phase}. Incidentally, this model was also considered in Ref. \cite{vacchini2011markovianity} in a different context.

Although this settles the question regarding the existence of  a purely  non-unital ENM channel, this also leads to the observation that in the above mentioned instances, the channel is not purely non-unital, in that the generator in either case can be expressed as a mixture that includes a unital component. Moreover, it is this unital component that introduces the ENM-ity.

It could be asked whether this no-go result is specific to two-level systems. It is straightforward to show that these results hold in general to arbitrary finite dimensions, by taking a trivial extension of the qubit dynamics, by embedding the qubit in two dimensions and padding the additional dimensions with an identity operation.
This point can be manifestly illustrated by considering another simple way to produce an example of non-unital ENM-ity, namely by combining the Pauli ENM channel with non-unital channels taking recourse to a higher dimensional space. Consider a ququart (4-level quantum system), wherein the  ``lower'' two-dimensional subspace, spanned by $\ket{0}$ and $\ket{1}$, is subject to the above Pauli ENM channel, whilst its ``higher'' two-dimensional subspace, spanned by $\ket{2}$ and $\ket{3}$, is subject to an AD or GAD channel. In other words, the two parts of the ququart are locally decohered, so that the sum of diagonals is preserved within each subspace and there are no cross-terms in the noise operator across the two subspaces (e.g., $\ket{1}\bra{2}$, etc.). This is mathematically captured by requiring that the generator for the ququart evolution is given by:
\begin{equation}
	\mathcal{L}_4 = \mathcal{L}_{01}(t)\oplus \mathcal{L}_{23}(t),
	\label{eq:L4}
\end{equation}
where the subscripts to the symbol $\mathcal{L}$ denote the subspaces acted upon. Evidently, the ququart as a whole is subject to non-unital noise, which is ENM by virtue of the evolution in first qubit subspace. The resulting map is given is given by
\begin{align}
	e^{\int_0^t \mathcal{L}_4(s) ds} [\rho] &= \begin{pmatrix}
		\mathcal{E}_{01} & \textbf{0} \\\textbf{0}  & \mathcal{E}_{23} \nonumber \\
	\end{pmatrix}[\rho]
	\nonumber \\
	&= \begin{pmatrix}
		\mathcal{E}_{01} & \textbf{0} \\\textbf{0}  & \mathbb{I}_2\nonumber \\
	\end{pmatrix} \cdot \begin{pmatrix}
		\mathbb{I}_2 & \textbf{0} \\\textbf{0}  & \mathcal{E}_{23} \nonumber \\
	\end{pmatrix}[\rho]
	\nonumber \\
	& = \sum_{j = 1}^3 \sum_{l = 1,2} M_j N_l \rho N_l^\dagger M_j ^\dagger
\end{align}
where \textbf{0} represents a $2 \times 2$ null matrix. Here the Kraus operators $M_j$ for the Pauli channel are obtained by padding an identity operator: $M_j = (E_j \oplus \mathbb{I}_2)$, where $E_j$ are given by Eq. (\ref{eq:ENM}),  In particular,  $M_1 = \sqrt{2k} \mathbb{I}_4$, and
$$
M_2 = \begin{pmatrix}
	0 & \sqrt{k} & 0 & 0 \\
	\sqrt{k}& 0 & 0 & 0 \\
	0 & 0 & 1 & 0 \\
	0 & 0 & 0 & 1
\end{pmatrix};
~~
M_3 = \begin{pmatrix}
	0 & -i\sqrt{k} & 0 & 0 \\
	i\sqrt{k} & 0 & 0 & 0 \\
	0 & 0 & 1 & 0 \\
	0 & 0 & 0 & 1
\end{pmatrix};
$$
The Kraus operators for the AD channel similarly are given by: $N_k \propto (\mathbb{I}_2 \oplus K_k)$, where $K_k$ are the qubit Kraus operators given in Eq. (\ref{eq:kraus_operators}), with $p{:=}1$. 
$$
N_1 = \begin{pmatrix}
	1 & 0 & 0 & 0 \\
	0 & 1 & 0 & 0 \\
	0 & 0 & 1 & 0 \\
	0 & 0 & 0 & \sqrt{1-\lambda}
\end{pmatrix};
~~
N_2 = \begin{pmatrix}
	1 & 0 & 0 & 0 \\
	0 & 1 & 0 & 0 \\
	0 & 0 & 0 & \sqrt{\lambda} \\
	0 & 0 & 0 & 0
\end{pmatrix}
$$
the ququart Kraus operators for the ENM and AD channels, respectively, where $k$ is given by Eq. (\ref{eq:k}). 
These diverse examples examined in the Section suggest that a purely non-unital channel (i.e., one lacking
any unital component) is impossible. Although we have only studied the qubit case, we conjecture that it is true for non-unital channels in arbitrary finite dimensions.

\section{Conclusions \label{sec:conclude}}
The ENM Pauli channel \cite{hall2014canonical} presented the rather counter-intuitive idea of a quantum noisy channel that is \textit{recoherent} for all $t>0$. Here we address the question of the possibility of a non-unital channel that is ENM, and show that it is impossible for the well-known class of GAD channels of qubits. 

We point out that this impossibility of eternal non-Markovianity does not hold for arbitrary non-unital channels. Specifically, as evident from examples discussed above, one can construct an ENM non-unital channel by a linear combination of the generators of an ENM unital channel and an arbitrary non-unital one. However, we conjecture that for a non-unital channel with no such unital component (in the above sense) to be ENM, is impossible.

A consequence of our above negative result is that the best that a qubit GAD channel can approach the ENM property is to be non-Markovian for all times above a non-zero threshold time $t^{\ast}$, i.e., strictly $t^{\ast} > 0$. By way of illustration, we provided an example of a ``quasi-ENM''  GAD channel, which is non-Markovian for all time $t \ge t^\ast > 0$. 

\section*{Acknowledgments}
SU was supported by IIT Madras via the Institute Postdoctoral Fellowship. R.S.   acknowledges the support of Department of Science and Technology (DST), India, Grant No.: MTR/2019/001516. RS and SB acknowledge support from Interdisciplinary Cyber Physical Systems (ICPS) programme of the Department of Science and Technology (DST), India, Grant No.: DST/ICPS/QuST/Theme-1/2019/6 and DST/ICPS/QuST/Theme-1/2019/13. SB also acknowledges support from the Interdisciplinary Program (IDRP) on Quantum Information and Computation (QIC) at IIT Jodhpur.

\bibliographystyle{unsrt}
\bibliography{eternal}
\end{document}